\newtheorem{theorem}{\noindent \textbf{Theorem}}
\newtheoremstyle{remarkstyle}
  {\topsep}   
  {\topsep}   
  {\slshape}  
  {0pt}       
  {\bfseries} 
  {.}         
  {5pt plus 1pt minus 1pt} 
  {}          
\theoremstyle{remarkstyle}
\newtheorem{definition}{\noindent \textbf{Definition}}
\begin{document}
\title{Low-High SNR Transition in Multiuser MIMO}
\author{Malcolm Egan$^{1}$

\begin{center}
{\footnotesize {\
\textit{
$^{1}$ Agent Technology Center, Faculty of Electrical Engineering, Czech Technical University in Prague, Czech Republic.\\[0pt]
} } }
\end{center}
}

\maketitle

\vspace{-2cm}

\begin{abstract}
%

Multiuser MIMO (MU-MIMO) plays a key role in the widely adopted 3GPP LTE standard for wireless cellular networks. While exact and asymptotic sum-rate results are well known, the problem of obtaining intuitive analytical results for medium signal-to-noise ratios (SNRs) is still not solved. In this paper, we propose the bend point, which quantifies the transition between low and high SNR; i.e., the beginning of the high SNR region. We derive the bend point for MU-MIMO with zero-forcing precoding and show that it is intimately related to the intercept of the high SNR asymptote with the zero sum-rate line. Using this result, we obtain a new approximation of the sum-rate at the bend point---providing a useful rule of thumb for the effect of increasing the number of antennas at medium SNR.

\end{abstract}
\newpage
\section{Introduction} \label{Introduction}

With the widespread adoption of the 3GPP LTE and development of the 3GPP LTE-A standards in wireless cellular networks, network MIMO--particularly, coordinated multipoint \cite{Lee2012}---is a key technique for mitigating interference between base stations (BSs). The basis of network MIMO is multiuser MIMO (MU-MIMO) \cite{Ghosh2012}, where a BS transmits data to multiple users simultaneously by exploiting multiple antennas.

A popular and practical signal processing technique for MU-MIMO is zero-forcing (ZF) precoding \cite{Caire2003}, where interference between users is cancelled. Although ZF precoding is suboptimal, it has the desirable feature that it requires only matrix inversion to implement.

Asymptotic---at high signal-to-noise ratios (SNR)---and exact analysis of ZF precoding is well-established \cite{Peel2005}, as well as antenna switching schemes \cite{Gesbert2007}. However, the problem of obtaining intuitive analytical results at medium SNR has not been solved. This problem is important as medium SNR is typically where practical systems operate and intuitive results are useful to provide rules of thumb for design.

In this paper, we propose the bend point as a new intuitive way of quantifying the transition between low and high SNR. In particular, the bend point is defined as the SNR where the second derivative of the sum-rate is maximized. In fact, the bend point can be interpreted as the beginning of the high SNR region.

We obtain a simple and accurate closed-form approximation for the bend point of the sum-rate for MU-MIMO with ZF precoding. We then show that in Rayleigh fading, the bend point increases as the number of antennas increases. Moreover, the sum-rate is approximately equal to the number of antennas times a constant, at the bend point. As such, the bend point provides a useful rule of thumb for the effect of increasing the number of antennas at medium SNR.

\section{System Model}

Consider the MU-MIMO network consisting of a BS with $N$ antennas and $N$ single antenna users. The received signal is denoted by $\mathbf{y} = [y_1,\ldots,y_N]^T$, given by
\begin{align}
\mathbf{y} = \mathbf{H}\mathbf{V}\mathbf{s} + \mathbf{n},
\end{align}
where $\mathbf{V} = \sqrt{P}[\mathbf{v}_1,\ldots,\mathbf{v}_N]$ is the precoder, $P$ is the transmit power, $\mathbf{H} = [\mathbf{h}_1,\ldots,\mathbf{h}_N]$ is the $N\times N$ channel matrix for which $\mathbf{h}_i^\dag$ corresponds to the channel vector of the $i$-th user, $\mathbf{s}$ denotes the Gaussian data symbols, $\mathbf{n}$ is the noise vector with each element distributed as $n_i \sim \mathcal{CN}(0,\sigma^2)$, and $\sigma^2$ is the noise power at each user. We also define the SNR as $\rho = 10\log_{10}\left(\frac{P}{\sigma^2}\right)$, which is closely related to the per-user SNR in decibels (dB).

We assume that there is perfect channel state information from each user to the BS. Although in practice there is limited feedback, the quantization error is small for sufficiently large quantization codebooks \cite{Egan2014}. We use the ZF precoder to suppress inter-user interference, which is given by
\begin{align}
\mathbf{V} = \frac{\mathbf{H}^{-1}}{\|\mathbf{H}^{-1}\|_F},
\end{align}
where $\|\cdot\|_F$ is the Frobenius matrix norm.

The sum-rate in nats of the network is then given by
\begin{align}
R = N\log\left(1 + \frac{10^{\rho/10}}{\eta}\right),
\end{align}
where $\eta = \|\mathbf{H}^{-1}\|_F^2$. It is important to note that $\rho$ is in dB, which is assumed throughout the remainder of the paper.

\section{The Bend Point}

In this section, we introduce the bend point to quantify the transition of the sum-rate from low to high SNR. We also show the intimate relationship between the bend point and the intercept of the high SNR asymptote with $R = 0$.

First, the bend point is defined as follows.
\begin{definition}[Bend Point]
The bend point, $\rho_{bend}$, of the sum-rate $R$ is the SNR $\rho$ such that the second derivative $R''(\rho)$ is maximized.
\end{definition}
Intuitively, the bend point can be interpreted as the transition point between high and low SNR, or as the beginning of the high SNR region. This is due to the fact that the bend point is where the variation of the rate of change of the slope of $R$ is maximized. At the bend point, the sum-rate transitions from the low SNR asymptote (i.e., $R = 0$) to the linear high SNR asymptote.

Next, we derive an exact analytical expression for the bend point when the channel is deterministic (i.e., $\mathbf{H}$ is fixed).
\begin{theorem}\label{thrm:det_bend}
The unique sum-rate bend point $\rho_{bend}$ in MU-MIMO with ZF precoding is given by
\begin{align}
\rho_{bend} = \rho_{int} = \frac{10\log \eta}{\log 10},
\end{align}
where $\rho_{int}$ is the intercept of the high SNR asymptote with $R = 0$.
\end{theorem}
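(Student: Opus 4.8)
The plan is to reduce the problem to a one-dimensional optimization by changing variables from the dB-scale SNR $\rho$ to the linear quantity $u = 10^{\rho/10}/\eta$, which is a strictly increasing smooth bijection of $\rho\in\mathbb{R}$ onto $(0,\infty)$. Writing $c = (\log 10)/10$ so that $u = e^{c\rho}/\eta$ and hence $u'(\rho) = c\,u$, the sum-rate becomes $R = N\log(1+u)$, and repeated application of the chain rule gives $R'(\rho) = Nc\,u/(1+u)$ and then
\begin{align}
R''(\rho) = Nc^2\,\frac{u}{(1+u)^2}.
\end{align}
Since $u$ is strictly increasing in $\rho$, maximizing $R''$ over $\rho\in\mathbb{R}$ is equivalent to maximizing $g(u) = u/(1+u)^2$ over $u>0$.

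Next I would carry out this one-dimensional maximization. Differentiating, $g'(u) = (1-u)/(1+u)^3$, which is positive for $u<1$, zero at $u=1$, and negative for $u>1$; hence $g$ attains a unique global maximum at $u=1$, and therefore $R''$ has a unique maximizer. Translating back through the change of variables, $u=1$ means $10^{\rho/10} = \eta$, i.e.\ $\rho_{bend} = 10\log\eta/\log 10$, which simultaneously establishes existence and uniqueness of the bend point.

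Finally I would identify this value with $\rho_{int}$. As $\rho\to\infty$, $R = N\log\bigl(1 + 10^{\rho/10}/\eta\bigr)$ is asymptotic to the affine function $R_{\infty}(\rho) = N\bigl(c\rho - \log\eta\bigr)$ with $c=(\log 10)/10$; solving $R_{\infty}(\rho)=0$ yields exactly $\rho = 10\log\eta/\log 10$. Comparing the two expressions gives $\rho_{bend} = \rho_{int}$.

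I do not anticipate a serious obstacle: the only points requiring care are (i) tracking the dB-to-linear conversion consistently so that the factor $c=(\log 10)/10$ is carried correctly through every derivative, and (ii) justifying that the stationary point of $g$ is the global maximum, which follows at once from the sign analysis of $g'$ rather than from a second-derivative test. One could instead differentiate $R''$ directly in $\rho$ and solve $R'''(\rho)=0$, but the change of variables makes the monotonicity and uniqueness transparent and avoids heavier algebra.
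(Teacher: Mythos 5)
Your proof is correct, and it takes a genuinely different (and in fact tighter) route than the paper. The paper starts from the asymptote intercept $\rho_{int}=10\log\eta/\log 10$, computes the third derivative $R'''(\rho)$ explicitly in the dB variable, verifies $R'''(\rho_{int})=0$ by substitution, invokes $R''''(\rho_{int})<0$ for a local maximum, and then disposes of uniqueness by asserting that $R''$ is symmetric about $\rho_{bend}$ and monotonically decreasing beyond it. You instead substitute $u=10^{\rho/10}/\eta$, obtain $R''(\rho)=Nc^2\,u/(1+u)^2$ with $c=(\log 10)/10$, and maximize $g(u)=u/(1+u)^2$ by the sign of $g'(u)=(1-u)/(1+u)^3$, which delivers existence and \emph{global} uniqueness of the maximizer $u=1$ in one stroke; translating back gives $\rho_{bend}=10\log\eta/\log 10$, and a separate computation of the high-SNR asymptote $N(c\rho-\log\eta)$ identifies this with $\rho_{int}$. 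What your approach buys is a complete uniqueness argument with only first- and second-order calculus (the paper's symmetry claim is true---$u/(1+u)^2$ is invariant under $u\mapsto 1/u$, i.e.\ reflection about $\rho_{bend}$---but it is stated without proof, and the fourth-derivative check is heavier algebra); what the paper's route buys is a more direct narrative link to the intercept, since it posits $\rho_{int}$ up front and verifies it is the critical point, whereas in your argument the coincidence $\rho_{bend}=\rho_{int}$ emerges only at the final comparison step.
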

\begin{proof}
First, the asymptotic sum-rate is given by
\begin{align}
R^{\infty} = N\log\left(\frac{10^{\rho_{int}/10}}{{\eta}}\right),
\end{align}
and the $R = 0$ intercept, $\rho_{int}$, satisfies
\begin{align}
N\log\left(\frac{10^{\rho_{int}/10}}{{\eta}}\right) = 0~~~
\Rightarrow \rho_{int} = \frac{10\log \eta}{\log 10}.
\end{align}
Next, the third derivative of $R$ is given by
\begin{align}\label{eq:second_deriv}
R'''(\rho) = \frac{N}{\eta^2}\left(\frac{\log 10}{10}\right)^3\left(\frac{\eta10^{\rho/10} - 10^{2\rho/10}}{(1 + \frac{10^{\rho/10}}{\eta})^3}\right).
\end{align}
Substituting $\rho_{int}$ into (\ref{eq:second_deriv}) yields $R'''(\rho_{int}) = 0$, which implies that $\rho_{int}$ is a stationary point. It is straightforward to check that the fourth derivative satisfies $R''''(\rho_{int}) < 0$, which implies $\rho_{int}$ is a local maximum of $R''(\rho)$. To prove that $\rho_{bend} = \rho_{int}$ is the unique maximum, we note that the second derivative $R''(\rho)$ is symmetric around $\rho_{bend}$ and monotonically decreasing for $\rho > \rho_{bend}$.
\end{proof}

\begin{figure}[h]
\centering{\includegraphics[width=90mm]{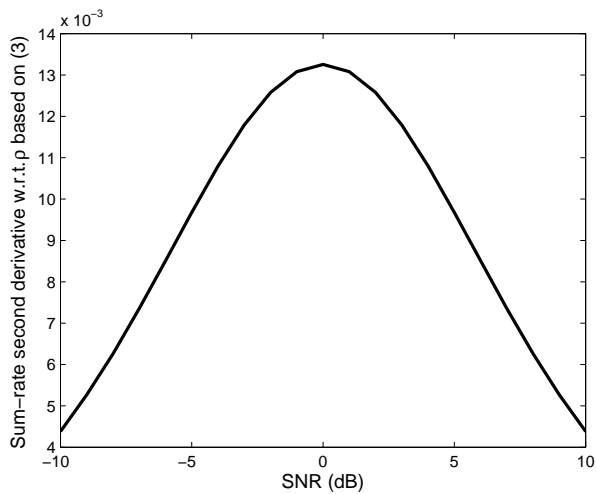}}
\caption{Plot of the sum-rate second derivative $R''(\rho)$ versus the SNR $\rho$, for $\eta = 1$.}\label{fig:2nd_deriv}
\end{figure}

The behavior of the second derivative $R''(\rho)$ is illustrated in Fig.~\ref{fig:2nd_deriv}. Observe that $\rho_{bend} = \rho_{int}$ is the unique maximum (which in this case corresponds to $\rho_{bend} = \rho_{int} = 0$), as expected.

We now consider the ergodic sum-rate with ZF precoding in Rayleigh fading. In this case, $\mathbf{H}$ is a random matrix with elements $h_{ii} \sim \mathcal{CN}(0,1)$, and $\eta$ is a random variable. In \cite{Peel2005}, it was shown that for a sufficiently large number of users, the ergodic sum-rate is well approximated by
\begin{align}\label{eq:ergodic}
R_{E} &= \int_0^\infty \log\left(1 + \frac{10^{\rho/10}\eta}{N}\right)e^{-\eta}d\eta
\notag\\
&= Ne^{N/10^{\rho/10}}E_1\left(\frac{N}{10^{\rho/10}}\right),
\end{align}
where $E_1(\cdot)$ is the exponential integral, given by
\begin{align}
E_1(x) = \int_x^\infty \frac{e^{-t}}{t}dt.
\end{align}
We now seek to find the bend point $\rho_{E,bend}$. Motivated by Theorem~\ref{thrm:det_bend}, we consider the high SNR asymptote for fixed $N$, which is given by
\begin{align}
R^{\infty}_{E} = N\left(\frac{\rho}{10}\log10 - \gamma - \log N\right),
\end{align}
where $\gamma$ is the Euler-Mascheroni constant. This holds since \cite{Abramowitz1985},
\begin{align}
E_1(x) \cong -\gamma - \log x,~~~\textrm{as}~x \rightarrow 0.
\end{align}
As such, the intercept $\rho_{E,int}$ of the high SNR asymptote with the line $R_E = 0$ is given by
\begin{align}\label{eq:erg_int}
\rho_{E,int} = .\frac{10(\gamma + \log N)}{\log 10}.
\end{align}
Importantly, the ergodic sum-rate at $\rho_{E,int}$ is given by
\begin{align}\label{eq:int_rate}
R_E(\rho_{E,int}) &= N\exp\left(10^{-\frac{\gamma}{\log 10}}\right)E_1\left(10^{-\frac{\gamma}{\log 10}}\right)\notag\\
&\approx 0.86N,
\end{align}
which is the number of antennas times a constant. This means that when the number of antennas is increased by one, the difference between the sum-rate---evaluated at the respective bend points--is approximately $0.86~$nats. This is illustrated in Fig. \ref{fig:rate_dif}, where the dots correspond to the intercepts $\rho_{E,int}$ for $N = 1$ and $N = 2$. Importantly, there is a gap of approximately $0.86~$nats between sum-rate at each intercept point, in agreement with (\ref{eq:int_rate}).

\begin{figure}[h]
\centering{\includegraphics[width=90mm]{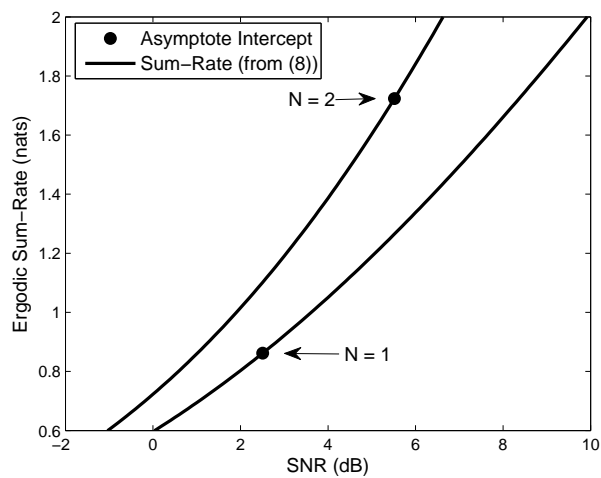}}
\caption{Plot of the ergodic sum-rate in Rayleigh Fading versus the SNR, for $N = 1$ and $N = 2$ antennas.}\label{fig:rate_dif}
\end{figure}

\section{Numerical Results and Discussion}

In this section, we show numerically that the bend point is well approximated by the asymptote intercept $\rho_{E,int}$, for the Rayleigh fading channel. This means that the sum-rate difference after increasing the number of antennas by one is approximately $0.86~$nats, between the resulting bend points; obtained via the asymptote intercept result in (\ref{eq:int_rate}).

\begin{figure}[h]
\centering{\includegraphics[width=90mm]{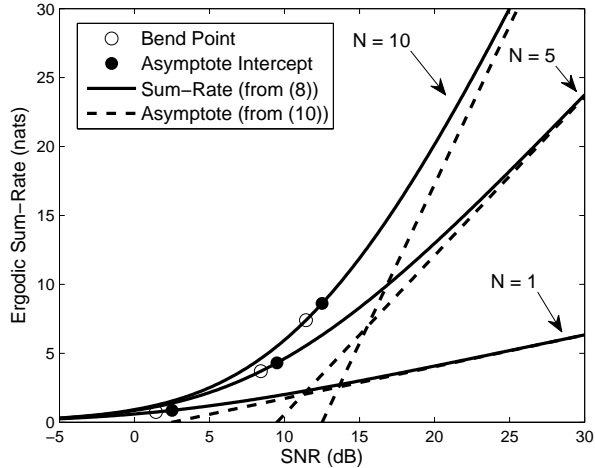}}
\caption{Plot of the ergodic sum-rate in Rayleigh fading (from (\ref{eq:ergodic})) versus the SNR $\rho$.}\label{fig:ergodic_rate}
\end{figure}

Fig.~\ref{fig:ergodic_rate} plots the sum-rate in nats versus the SNR, for the Rayleigh fading channel. The key observation is that the bend point $\rho_{E,bend}$ and the asymptote intercept $\rho_{E,int}$ are approximately equal. Moreover, the bend point increases as the number of antennas increases, which was expected from (\ref{eq:erg_int}).

As the bend point is well approximated by the asymptote intercept, it means that the sum-rate at the bend points for different numbers of antennas is approximately given by (\ref{eq:int_rate}). In particular, increasing the number of antennas by one leads to an increase in the sum-rate of $0.86~$nats, between the resulting bend points. This provides a useful rule of thumb for the effect of increasing the number of antennas at medium SNR.

\section{Conclusion}

We have introduced the bend point to quantify the transition between low and high SNR in MU-MIMO with ZF precoding. We have shown that there is an intimate connection between the bend point and the intercept of the high SNR asymptote with $R = 0$. Using this result, we then provided a rule of thumb for the effect of increasing the number of antennas at medium SNR.

\bibliographystyle{ieeetr}
\bibliography{bendpoint}

\end{document}